\def\s{{\bigtriangleup}}
\def\CC{{\mathbb C}} \def\NN{{\mathbb N}} \def\RR{{\mathbb R}} 
\title{Algebraic methods for counting Euclidean embeddings of rigid graphs}
\author{Ioannis Z.\ Emiris \inst{1} \and  Elias P.\ Tsigaridas\inst{2} \and
  Antonios Varvitsiotis\inst{1} } 
\institute{
  National and Kapodistrian University of Athens, Greece. \and INRIA     M\'editerran\'ee, Sophia-Antipolis, France.}
\begin{document}
\maketitle

\begin{abstract}
  The study of (minimally) rigid graphs is motivated by numerous applications,
  mostly in robotics and bioinformatics. 
  A major open problem concerns the number of embeddings of such graphs,
  up to rigid motions, in Euclidean space. We capture embeddability by polynomial systems
  with suitable structure, so that their mixed volume, which bounds the number 
  of common roots, to yield interesting upper bounds on the number of embeddings.
  We focus on $\RR^2$ and $\RR^3$, where Laman graphs and 1-skeleta of convex
simplicial polyhedra, respectively, admit inductive Henneberg
constructions. We establish the first  lower bound in $\RR^3$ of about
$2.52^n$, where $n$ denotes the number of vertices. Moreover, our implementation
yields upper bounds for $n \le 10$ in $\RR^2$ and $\RR^3$, which reduce the existing gaps, and tight bounds  up to  $n=7$ in $\RR^3$.

\medskip

{\bf Keywords:}
{rigid graph, Euclidean embedding, Henneberg construction,
polynomial system, lower bound, root bound, cyclohexane caterpillar}
\end{abstract} 

\section{Introduction}

Rigid graphs (or frameworks)
constitute an old but still very active area of research due to their deep
mathematical and algorithmic questions, as well as numerous
applications, e.g.\ mechanism and linkage theory \cite{WH07,WH07b},
and structural bioinformatics \cite{EM99,JRKT01,TD99}. 

Given a graph $G=(V,E)$ and a collection of edge lengths
$l_{ij}\in\RR^{+}$, for $(i,j) \in E$, a {\em Euclidean embedding} in $\mathbb{R}^d$
is a mapping of $V$ to a set of points in $\mathbb{R}^d$, such that $l_{ij}$ equals
the Euclidean distance between the images of the $i$-th and $j$-th vertices,
for $(i,j) \in E$.
Euclidean embeddings impose no requirements on whether the edges cross each other or not.
A graph is (generically) {\em rigid} in $\mathbb{R}^d$ iff, for generic (or random)
edge lengths, it is embedded in $\RR^d$ in a finite number of ways, modulo rigid motions.
A graph is {\em minimally rigid} iff it is no longer rigid once any edge is removed.

A graph is called {\em Laman} iff $|E|=2|V|-3$ and, additionally, 
all of its induced subgraphs on $k<|V|$ vertices have $\le 2k-3$ edges.
The class of Laman graphs coincides with  the generically minimally rigid graphs in $\mathbb{R}^2$,
and also admit inductive constructions.
In $\RR^3$ there is no analogous combinatorial characterization
of generically rigid graphs. On the other hand, the 1-skeleta, or edge graphs,
of (convex) simplicial polyhedra are minimally rigid in $\mathbb{R}^3$,
and admit inductive constructions, cf.\ Section\ \ref{Sgenerate3d}.

In this paper, we deal with the problem of computing the maximum number
of distinct planar and spatial Euclidean embeddings of (minimally) rigid graphs,
up to rigid motions, as a function of the number of vertices.
To study upper bounds, we define a square polynomial system, 
expressing the edge length constraints, whose real solutions correspond
precisely to the different embeddings. 
Here is a system expressing embeddability in $\RR^3$, where $(x_i,y_i,z_i)$ are the coordinates
of the $i$-th vertex, and 3 vertices are fixed to discard translations and rotations:
\begin{equation}\label{Esystem} \small \left\{ \begin{array}{ll}
      x_i=a_i, \; y_i=b_i, \; z_i=c_i,&  i=1,2,3 , \quad a_i,b_i,c_i\in\RR, \\
      (x_i-x_j)^2+(y_i-y_j)^2+(z_i-z_j)^2 = l_{ij}^2,\;\; & (i,j)\in E-\{(1,2),(1,3),(2,3)\}
    \end{array} \right.
\end{equation}
All nontrivial equations are quadratic; there are $2n-4$ for Laman graphs, and $3n-9$
for 1-skeleta of simplicial polyhedra, where $n$ is the number of vertices.
The classical B\'ezout bound on the number of roots equals the product of the
polynomials' degrees, and yields $4^{n-2}$ and $8^{n-3}$, respectively.
It is indicative of the hardness of the problem that efforts to substantially
improve these bounds have failed.

For the planar and spatial case, the best known upper bounds are 
$\binom{2n-4}{n-2} \approx 4^{n-2}/ \sqrt{\pi (n-2)}$ and
$\frac{2^{n-3}}{n-2}\binom{2n-6}{n-3} \approx 8^{n-3}/\big((n-2)\sqrt{\pi (n-3)}\big)$, respectively.
These bounds were obtained using 
complex algebraic geometry \cite{B02,BS04}.
For $\RR^2$, there exist   lower bounds of  
$24^{\lfloor (n-2)/4 \rfloor}\simeq 2.21^n$ and $2\cdot 12^{\lfloor (n-3)/3 \rfloor}\simeq 2.29^n/6$, obtained  by  a caterpillar and a 
fan\footnote{This corrects the exponent of the original statement.} construction, respectively. 
Both of these bounds  are based on the  Desargues (or 3-prism) graph
(Figure~\ref{desargues graph}).

In applications, it is crucial to know the number of embeddings for specific 
(small) values of $n$.
The most important result in this direction was to show that the Desargues graph  admits 24
embeddings in the plane~\cite{BS04}. Moreover, the $K_{3,3}$ graph admits 16 embeddings in the plane~\cite{W77,WH07}  and 
the cyclohexane graph  admits 16 embeddings in space~\cite{EM99}.


Mixed volume (or Bernstein's bound) of a square polynomial system 
exploits the sparseness of the equations to bound the number of common roots,
it is always bounded by B\'ezout's bound and typically much tighter, cf. Section~\ref{sec:abid}.
We have implemented specialized software that  constructs  all rigid graphs up to isomorphism,
for small $n$, and computes  the mixed volumes of their respective polynomial systems.
Our results indicate that mixed volume can be of general interest in enumeration problems.

Our main contribution is twofold, besides some straightforward
upper bounds in Lemmas~\ref{Lsparse2} and~\ref{Lsparse3}. 
%
First, we derive the first lower bound in $\RR^3$:
$$
16^{\lfloor (n-3)/3\rfloor} \simeq 2.52^n,\, n\ge 9,
$$
by designing a cyclohexane caterpillar.

Second, we obtain upper and lower  bounds for $n \le 10 \ $ in  $\RR^2$  and $\RR^3$, which  reduce the existing gaps, 
see Tables~\ref{tab:2D-constructions} and \ref{tab:3D-constructions} in the Appendix.
Moreover, we establish tight bounds up to $n = 7$ in $\RR^3$ by
appropriately formulating the polynomial system.
We apply Bernstein's Second theorem to show that the naive polynomial
system cannot yield tight mixed volumes,
a fact already observed in the planar case~\cite{ST08}.

The rest of the paper is structured as follows: 
Section~\ref{sec:laman-2d} discusses the case $d=2$,
Section~\ref{sec:abid} presents our algebraic tools and our implementation,
Section~\ref{Sgenerate3d} deals with $\RR^3$, and we conclude with open questions
and a conjecture.
Omitted Figures and Tables have been  included  in the Appendix.

Some results appeared in \cite{EV09} in preliminary form.

\section{Planar embeddings of Laman graphs} \label{sec:laman-2d} 

 
 Laman graphs admit inductive constructions  that begin with  a triangle, followed by a sequence of Henneberg-1 (or $H_1$) and Hennenerg-2 steps (or $H_2$).  Each such step adds a new  vertex as follows:
a $H_1$ step  connects it to two existing vertices;
a $H_2$ step  connects it to three existing vertices having
at least one edge among them, which is removed (Figure~\ref{planar Henneberg}).
We represent each Laman graph by $\bigtriangleup s_4 \ldots,s_n$, where $s_i \in\{1,2\}$; this is known as
 its {\em Henneberg sequence}.
A Laman graph  is called  $H_1$ iff it can be constructed using only $H_1$ steps, and
 $H_2$ otherwise.
Since two circles intersect generically in two points, a $H_1$ step  at most doubles
the number of embeddings and this is tight, generically. It follows that a $H_1$  graph on $n$ vertices has  $2^{n-2}$ embeddings~\cite{BS04}.

One can easily verify that every $\bigtriangleup 2$ graph is isomorphic to a $\bigtriangleup 1$ graph and that 
every $\bigtriangleup 12 $ graph is isomorphic to a $\bigtriangleup 11$ graph. Consequently, all Laman graphs with  $n=4,5$ are $H_1$ and they have 4 and 8 embeddings, respectively.   

For a Laman graph on 6 vertices, there are 3 possibilities: it is either $H_1$,   $K_{3,3}$ or the Desargues graph. Since the $K_{3,3}$ graph  has at most 16 embeddings~\cite{W77,WH07} and  the Desargues graph has 24 embeddings~\cite{BS04}, the uppper bound is 24  for $n=6$.


\begin{lemma}\label{laman64}
The maximum number of Euclidean embeddings for Laman graphs with $n= 7,\dots,10$ 
is 64, 128, 512 and 2048, respectively.
\end{lemma}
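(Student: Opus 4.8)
The quantity to bound is, for each $n$, the maximum over Laman graphs $G$ on $n$ vertices of the number of distinct real Euclidean embeddings of $G$ for a best choice of edge lengths; the plan is to sandwich it between an exhaustively computed upper bound and an explicit lower bound, and to observe that the two coincide. \emph{Upper bound.} By the Henneberg characterization recalled above, every Laman graph on $n$ vertices is $\bigtriangleup s_4\cdots s_n$ with $s_i\in\{1,2\}$, so for each $n\le 10$ the set of such graphs up to isomorphism is finite and is produced explicitly by the software of Section~\ref{sec:abid}. For each graph $G$ I form its planar edge-length system --- the analogue of \eqref{Esystem}, a square system of $2n-4$ quadrics obtained by fixing the two endpoints of one edge and discarding that edge's equation --- and compute its mixed volume $\mathrm{MV}(G)$. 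Since Bernstein's bound is an upper bound on the number of isolated complex solutions of any system with the given Newton polytopes, hence on the number of real embeddings for every choice of edge lengths (taken in generic position so that no embedding lies on a coordinate hyperplane), it suffices to check by this finite computation that $\max_G\mathrm{MV}(G)$ equals $64,128,512,2048$ for $n=7,8,9,10$. The only non-mechanical ingredient is that the enumeration be provably exhaustive and isomorphism-reduced, which is exactly the content of the inductive construction of Laman graphs.

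\emph{Lower bound.} For each $n$ I would produce one Laman graph $G_n$ --- either read off from the enumeration as a graph attaining the maximal mixed volume, or built by a short Henneberg sequence from a small graph with many embeddings such as $K_{3,3}$ --- together with explicit rational edge lengths, and certify that its system has exactly the claimed number of \emph{real} solutions. When $G_n$ is built by Henneberg steps, an $H_1$ step can be arranged to double the count (its two defining circles meeting in two real points while the earlier configurations stay real) and an $H_2$ step can multiply it by up to $4$ (the deleted edge frees one degree of freedom, the new vertex carries a twofold choice, and a favourable length assignment keeps a fourfold family of solutions real), so that a suitable interleaving realizes the jumps $64\to128\to512\to2048$. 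Reality is certified by an exact computation on the resulting zero-dimensional ideal --- a Gr\"obner basis over the rationals followed by a real-root count via Sturm sequences or the signature of the Hermite form --- or, equivalently, by exhibiting explicit real coordinates for all solutions and invoking the matching mixed-volume bound to preclude any others; the concrete graphs and lengths are those tabulated in Table~\ref{tab:2D-constructions}.

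\emph{Main obstacle.} The delicate step is the lower bound, and within it the passage from a large \emph{complex} root count to a large \emph{real} one: the Henneberg data must be chosen so that all configurations created along the way remain real simultaneously while the $H_2$ steps still realize their factor-$4$ gains --- a tight, over-constrained-looking demand that can only be met by an explicit fortunate choice of lengths, which then has to be certified. A lesser but genuine difficulty is that the mixed volume need not be attained by the edge-length system of the maximizing graph, in which case the upper-bound argument must be refined by an exact count of the complex solutions of that particular system; this remains a finite task for $n\le 10$, and the assertion of the lemma is precisely that for these $n$ the two sides --- exhaustive upper bound and certified construction --- return the same four numbers.
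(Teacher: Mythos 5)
Your upper-bound argument coincides with the paper's entire proof: Lemma~\ref{laman64} is established there solely by enumerating all Laman graphs on $n=7,\dots,10$ vertices via Henneberg constructions (Section~\ref{sec:abid}), computing the mixed volume of each graph's edge-length system, and taking the maximum. That part of your proposal is a faithful reconstruction of the computer-assisted argument.

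The genuine gap is in your lower-bound plan. Despite the wording ``the maximum number \dots\ is'', the paper only establishes $64, 128, 512, 2048$ as \emph{upper} bounds; its own Table~\ref{tab:2D-constructions} records best known lower bounds of $48, 96, 288, 576$ for $n=7,8,9,10$, strictly below the claimed values, obtained from $H_1$ extensions of the Desargues graph and the Desargues fan. So the explicit graphs, rational edge lengths, and reality certificates you propose are not ``tabulated in Table~\ref{tab:2D-constructions}'' and cannot be produced by the route you describe: there is no known choice of lengths for which an $H_2$ step multiplies the number of \emph{real} embeddings by $4$, and the paper only \emph{conjectures} (in its final section) that $H_2$ multiplies the count by \emph{at most} $4$. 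You correctly flagged the passage from complex root counts to real ones as the main obstacle, but it is not merely delicate --- it is open, and your proof cannot be completed as stated. If the lemma is read as the upper-bound claim its proof actually supports, your proposal's first half suffices and matches the paper; the second half should be dropped.
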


\begin{proof}
Using our software (Section~\ref{sec:abid}), we construct all Laman graphs with $n=7,\dots,10$,
and compute their  respective mixed volumes, thus obtaining the upper bounds.
\end{proof}

\if 0
\begin{lemma}
The number of embeddings for $n=8$ is 128 and this is tight.\end{lemma}
\begin{proof}
  $K_{3,3}$ followed by two $H_1$ steps yields a lower bound of 128.
  Our software constructed all Laman graphs with $n= 8$
  and computed the mixed volume of the respective systems, thus showing that they
have at most 128 embeddings.\end{proof}
\fi

Table~\ref{tab:2D-constructions} in the Appendix summarizes our results for $n\le 10$.
The lower bound for $n= 9$ 
follows from  the Desargues fan \cite{BS04}. All other lower bounds follow  from   the fact that a $H_1$ step exactly doubles the number of embeddings.

We now establish an  upper bound, which improves upon the existing ones when our 
graph  contains many degree-2 vertices.
Our proof parallels that of \cite{ST08} which uses mixed volumes to bound the effect
of a $H_1$ step, when it is the last one in the Henneberg sequence.

\begin{lemma}\label{Lsparse2}
  Let $G$ be a Laman graph with  $k \ge 4$ degree-2 vertices.
  Then, the number of planar embeddings of $G$ is bounded above by $2^{k-4}     4^{n-k}$.
\end{lemma}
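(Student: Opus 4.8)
The plan is to peel off the degree-2 vertices one by one, using at each step the observation of \cite{ST08} that appending a degree-2 vertex (an $H_1$ step) at most doubles the number of complex solutions of the associated square system, and then to apply B\'ezout's bound $4^{m-2}$ to the Laman graph that remains on $m=n-k$ vertices. Throughout, I would work with $\mu(H)$, the number of isolated complex solutions of the planar version of system~\eqref{Esystem} for a Laman graph $H$; this is finite because $H$ is generically rigid, and it dominates the number of real embeddings of $H$ up to rigid motions, so upper bounds on $\mu$ suffice.

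The step I expect to carry the real content is a purely combinatorial claim: in any Laman graph on $n\ge 4$ vertices the degree-2 vertices form an independent set, and each of them has both neighbours of degree $\ge 3$. I would prove this by a counting argument: if two degree-2 vertices $u,v$ were adjacent, then the $2n-3$ edges of $G$ decompose into the edge $uv$, the two other edges incident to $\{u,v\}$, and $2n-6$ further edges, all lying in the subgraph induced on $V\setminus\{u,v\}$; since $2\le n-2<n$ this violates the Laman bound $2(n-2)-3=2n-7<2n-6$. Two consequences are then immediate: removing a degree-2 vertex from a Laman graph again gives a Laman graph (the vertex/edge counts and the hereditary subgraph condition are routine), and, writing $v_1,\dots,v_k$ for the degree-2 vertices of $G$, no $v_i$ is a neighbour of any $v_j$, so each $v_i$ still has degree exactly $2$ in $G_{i-1}:=G-\{v_1,\dots,v_{i-1}\}$ and hence can be removed in turn.

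With this in hand I would run the peeling $G=G_0\supset G_1\supset\cdots\supset G_k=G'$, each $G_i=G_{i-1}-v_i$ being Laman on $n-i$ vertices. A solution of the system for $G_{i-1}$ restricts to a solution for $G_i$, while conversely the position of $v_i$ is pinned to the intersection of two circles centred at the images of its two neighbours, hence takes at most two values; this gives $\mu(G_{i-1})\le 2\,\mu(G_i)$, which is exactly the mixed-volume bound of \cite{ST08} on the effect of a final $H_1$ step. Iterating yields $\mu(G)\le 2^k\,\mu(G')$, and since $G'$ is Laman on $n-k\ge 2$ vertices (note $k\ge 4$ forces $n\ge 6$, so $n-k\ge 2$ always), B\'ezout's bound gives $\mu(G')\le 4^{(n-k)-2}$, with the convention $\mu(G')=1$ when $n-k=2$. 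Combining, $\mu(G)\le 2^{k}\cdot 4^{\,n-k-2}=2^{\,2n-k-4}=2^{\,k-4}\,4^{\,n-k}$, as claimed.

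The main obstacle is really the structural claim that degree-2 vertices are pairwise non-adjacent: it is what guarantees that \emph{all} $k$ of them survive the peeling as degree-2 vertices, so that each can be removed at the cost of only a factor $2$; without it one could only peel a single degree-2 vertex at a time in the original labelling. Everything else is bookkeeping — combining the per-step doubling bound with B\'ezout on the residual graph — but some care is needed to check that $G'$ is a genuine Laman graph (so that B\'ezout applies and $G'$ stays connected) and to keep the whole argument at the level of complex solution counts, so the inequalities hold without tracking real versus complex solutions or multiplicities.
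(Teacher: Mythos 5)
Your proof is correct and follows essentially the same route as the paper: remove the $k$ degree-2 vertices (each reverse $H_1$ step costing a factor of at most $2$) and apply B\'ezout's bound $4^{(n-k)-2}$ to the remaining Laman graph, giving $2^k\,4^{n-k-2}=2^{k-4}4^{n-k}$. The independence of the degree-2 vertices, which you prove by the edge-counting argument, is exactly the fact the paper asserts in passing (``the removal \dots does not affect any of the existing degree-2 vertices''), so your write-up just makes the published argument explicit.
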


\begin{proof}
The removal of the existing $k$ degree-2 vertices  does not affect any of the existing degree-2 vertices (because the
  remaining graph  is also Laman), although it may create new ones.
  Since the remaining graph has $n-k$ vertices, the B\'ezout bound of its
polynomial system is equal to $4^{n-k}$ and thus  the number of
embeddings is bounded above by    $2^{k-4} 4^{n-k}$.
\end{proof}

\section{An algebraic interlude} \label{sec:abid}

This section introduces mixed volumes and discusses our computer-assisted proofs;
for background see \cite{B75,EmCa95} and references therein.

Given a polynomial $f$ in $n$ variables, its support is the set of exponents
in $\NN^n$ corresponding to nonzero terms (or monomials).
The Newton polytope of $f$ is the convex hull of its support and lies in $\RR^n$.
Consider polytopes $P_i\subset\RR^n$ and parameters $\lambda_i\in\RR, \lambda_i\ge 0$, for $i=1,\dots,n$.
%
Consider the Minkowski sum of the scaled polytopes
$\lambda_1 P_1+\cdots+\lambda_nP_n \in\RR^n$;
its (Euclidean) volume is a homogeneous polynomial of degree $n$ in the $\lambda_i$.
The coefficient of $\lambda_1\cdots \lambda_n$ is the
{\em mixed volume} of $P_1,\ldots,P_n$.
If $P_1 = \cdots =P_n$, then the mixed volume is $n!$ times the volume of $P_1$.
We focus on the topological {\em torus} $\CC^*=\CC-\{0\}$.
 
\begin{theorem} {\rm \cite{B75}}
Let $f_1=\cdots= f_n=0$ be a polynomial system in $n$ variables with real coefficients,
where the $f_i$ have fixed supports.
  The number of isolated common solutions in $(\mathbb{C}^*)^n$ is bounded above 
  by the {\em mixed volume} of (the Newton polytopes of) the $f_i$.
  This bound is tight for a generic choice of coefficients of the $f_i$'s.
\end{theorem}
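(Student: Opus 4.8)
The plan is to prove the upper bound and its generic tightness together, via a toric deformation (polyhedral homotopy) of the system, and then to deduce the bound for arbitrary coefficients by a perturbation argument. First I would fix \emph{generic} coefficients $c_{i,a}\in\CC^{*}$ for each $f_i=\sum_{a\in A_i}c_{i,a}x^{a}$, where $A_i\subset\NN^{n}$ is the prescribed support, and choose a generic lifting $\omega_i\colon A_i\to\RR$. Replacing $c_{i,a}$ by $c_{i,a}\,t^{\omega_i(a)}$ produces a one-parameter family $f_i(x;t)$ with $f_i(x;1)=f_i(x)$, whose solution paths as $t\to 0^{+}$ are governed by the regular \emph{mixed subdivision} of $P_1+\cdots+P_n$ obtained by projecting the lower facets of the lifted polytopes $\widetilde{P}_i=\mathrm{conv}\{(a,\omega_i(a)):a\in A_i\}\subset\RR^{n+1}$.

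Next I would analyse each path by the substitution $x_j=y_j\,t^{-v_j}$ for a candidate valuation vector $v\in\RR^{n}$: as $t\to 0$ the lowest power of $t$ in $f_i$ is attained on the face of $\widetilde{P}_i$ with inner normal $(v,1)$, so any path converging to a limit $y\in(\CC^{*})^{n}$ must satisfy the corresponding \emph{facial subsystem}. Such a limit exists precisely when $v$ is dual to a cell $\sigma=\sigma_1+\cdots+\sigma_n$ of the mixed subdivision with each $\sigma_i$ an edge of $P_i$ (a \emph{mixed cell}); for such $\sigma$ the facial subsystem is, after a monomial change of coordinates, a system of $n$ binomials whose number of solutions in $(\CC^{*})^{n}$ equals the lattice index $[\,\mathbb{Z}^{n}:\mathbb{Z}\sigma_1+\cdots+\mathbb{Z}\sigma_n\,]$, i.e.\ the normalized mixed volume $\mathrm{MV}(\sigma)$ of the cell — the standard Smith-normal-form computation for binomial systems. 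By genericity of the $c_{i,a}$ each such solution extends uniquely, along a Puiseux branch, to a nondegenerate solution of $f(x;t)$ for small $t>0$, and distinct cells and distinct facial solutions give distinct branches. Summing over all mixed cells and using additivity of mixed volume over a mixed subdivision, $\sum_\sigma \mathrm{MV}(\sigma)=\mathrm{MV}(P_1,\dots,P_n)$, shows that at $t=1$ the generic system has exactly $\mathrm{MV}(P_1,\dots,P_n)$ isolated solutions in $(\CC^{*})^{n}$, all simple; this is the tightness claim.

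For arbitrary real (or complex) coefficients with the same supports I would perturb to a generic system $f^{\varepsilon}$ as above. An isolated solution $\xi\in(\CC^{*})^{n}$ of $f$ of multiplicity $m$ has a small neighbourhood $U$ with $f^{-1}(0)\cap U=\{\xi\}$, and by stability of the topological degree (conservation of intersection multiplicity) $f^{\varepsilon}$ has exactly $m\ge 1$ solutions in $U$ for $\varepsilon$ small. As the neighbourhoods of distinct isolated solutions are disjoint, the number of isolated solutions of $f$ in $(\CC^{*})^{n}$, even counted with multiplicity, is at most that of $f^{\varepsilon}$, namely $\mathrm{MV}(P_1,\dots,P_n)$, which is the stated bound. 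The main obstacle is the analysis of the homotopy: one must show it is \emph{complete}, i.e.\ that every solution path converges and that no roots escape to the toric boundary or to infinity unaccounted for — this is exactly where coherence of the mixed subdivision and genericity of the lift $\omega$ are used — and that only the mixed cells contribute, with the claimed binomial root count. An alternative, non-constructive route avoids the homotopy: compactify $(\CC^{*})^{n}$ to the projective toric variety $X$ whose fan refines the normal fans of all $P_i$, note that for generic $f_i$ the closures of $\{f_i=0\}$ are divisors $D_i$ in the linear systems attached to the $P_i$ which, after a generic torus translation to avoid the invariant strata (using translation-invariance of $\mathrm{MV}$), meet transversally inside the torus by Bertini/Kleiman, and identify the intersection number $D_1\cdots D_n$ with $\mathrm{MV}(P_1,\dots,P_n)$ through the description of the Chow ring of $X$; there the delicate point is confining the intersection to the open torus.
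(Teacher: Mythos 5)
The paper does not prove this statement: it is quoted verbatim from Bernstein's work \cite{B75} and used as a black box, so there is no in-paper argument to compare yours against. Judged on its own terms, your outline is the standard proof --- Bernstein's original Puiseux-series argument in the polished form of a polyhedral (Huber--Sturmfels) homotopy for the generic count, followed by the usual stability-of-degree perturbation to obtain the upper bound on isolated roots for arbitrary coefficients --- and it is correct as a skeleton. The points you flag as the main obstacles are indeed where all the substance lies: (i) completeness, i.e.\ that a solution branch $x(t)$ with valuation vector $v$ landing in $(\CC^*)^n$ forces the facial system supported by $(v,1)$ to have a torus solution, and that for a generic lift and generic coefficients only the fine mixed cells (each $\sigma_i$ an edge) admit such solutions; (ii) the binomial count $[\,\mathbb{Z}^n:\mathbb{Z}\sigma_1+\cdots+\mathbb{Z}\sigma_n\,]=\mathrm{MV}(\sigma)$ via Smith normal form; and (iii) the additivity $\sum_\sigma\mathrm{MV}(\sigma)=\mathrm{MV}(P_1,\dots,P_n)$ over a coherent mixed subdivision. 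It is worth noting that the genericity certificate you need in (i) --- no facial subsystem has a solution in the torus except at the mixed cells --- is precisely the content of Bernstein's Second Theorem (Theorem~\ref{Bernstein2nd} of the paper), so a fully written version of your argument would establish that companion statement along the way. Two small points to make explicit rather than gaps: in the perturbation step one should observe that the $m\ge 1$ roots of $f^{\varepsilon}$ near an isolated root $\xi$ remain inside $U\subset(\CC^*)^n$ and are therefore among the at most $\mathrm{MV}(P_1,\dots,P_n)$ torus roots of the generic system (you say this, but it is the one place a root could in principle be lost to the toric boundary); and in the toric-variety alternative the identity $D_1\cdots D_n=\mathrm{MV}(P_1,\dots,P_n)$ holds for the $n!$-normalized mixed volume, which matches the normalization implicit in the paper's definition as the coefficient of $\lambda_1\cdots\lambda_n$.
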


Bernstein's Second Theorem below, describes genericity.
Given $v \in \RR^n-\{0\}$ and polynomial $f_i$, we denote by $\partial_v f_i$
the polynomial obtained by keeping only those terms whose exponents minimize
the inner product with $v$.
The Newton polytope of $\partial_v f_i$ is the face of the Newton polytope of $f_i$
supported by $v$.

\begin{theorem}\label{Bernstein2nd} {\rm \cite{B75}}
  If for all $v \in \RR^n-\{0\}$ the face system
  $\partial_vf_1=\ldots=\partial_vf_n=0$ has no solutions in $(\mathbb{C^*})^n$,
then the mixed volume of the $f_i$ exactly equals the number of
solutions in $(\mathbb{C^*})^n$, and all solutions are isolated.
Otherwise, the mixed volume is a strict upper bound on the number of isolated solutions.
\end{theorem}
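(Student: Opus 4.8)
The plan is to derive the Second Theorem from the First one (the cited theorem above) by a homotopy--continuation argument that reads the obstruction to tightness off the behaviour of the solution paths near the boundary of the torus. Write $M$ for the mixed volume of $f_1,\dots,f_n$. First I would pick an auxiliary system $g_1=\dots=g_n=0$ with exactly the same supports as $f_1,\dots,f_n$ but generic coefficients; by the First Theorem it has $M$ isolated, simple solutions in $(\CC^*)^n$. Form $H_i(x,t)=(1-t)g_i(x)+t\,f_i(x)$, $t\in\CC$. For a generic choice of the coefficients of $g$ the ``bad'' values of $t$ form a finite set not containing $0$, so over a punctured neighbourhood of $t=1$ the $M$ solutions sweep out $M$ algebraic branches, each expandable as a Puiseux series $x_i(t)=c_i\,(1-t)^{w_i}+(\text{h.o.t.})$ with $c_i\in\CC^*$ and $w=(w_1,\dots,w_n)\in\mathbb{Q}^n$; such a branch leaves every compact subset of $(\CC^*)^n$ as $t\to1$ exactly when $w\neq 0$.

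The crux is a local leading--term computation. Setting $s=1-t$, a monomial $x^a$ evaluated on a branch is $c^a\,s^{\langle w,a\rangle}(1+o(1))$, so the lowest power of $s$ occurring in $f_i(x(s))$ is $\mu_i:=\min_{a\in\mathrm{supp}(f_i)}\langle w,a\rangle$, with coefficient exactly $(\partial_w f_i)(c)$. Since $g_i$ and $f_i$ share their supports, $(1-s)f_i(x)$ contributes this same $s^{\mu_i}$--term while $s\,g_i(x)$ only enters at order $s^{\mu_i+1}$; hence $H_i(x(s),1-s)=(\partial_w f_i)(c)\,s^{\mu_i}+O(s^{\mu_i+1})$, and vanishing along the branch forces $(\partial_w f_i)(c)=0$ for every $i$. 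So an escaping branch ($w\neq 0$) produces a solution $c\in(\CC^*)^n$ of the face system $\partial_w f_1=\dots=\partial_w f_n=0$. The same computation applied to a curve germ running inside the closure (in a smooth complete toric compactification) of a positive-dimensional component of $V(f)\cap(\CC^*)^n$ and hitting the toric boundary shows that such a component also forces a face-system solution in $(\CC^*)^n$.

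With this, the forward implication is quick: if no face system has a solution in $(\CC^*)^n$, then no branch escapes and $V(f)\cap(\CC^*)^n$ is finite, so the homotopy is proper near $t=1$, the endpoint map is surjective onto $V(f)\cap(\CC^*)^n$, and the number of branches terminating at a point equals its multiplicity; hence all solutions in $(\CC^*)^n$ are isolated and their multiplicities sum to $M$. For the converse one reverses the argument: if some $\partial_v f$ has a solution in $(\CC^*)^n$, I would produce --- via the one-parameter toric rescaling $x\mapsto\tau^{v}x$ renormalised by $\tau^{-\mu_i}$, whose $\tau\to 0$ limit is the system $\partial_v f$ of mixed volume $0$ --- a solution branch of the homotopy that escapes to the boundary; being ``wasted'', it leaves strictly fewer than $M$ branches ending at torus roots of $f$, and, by lower semicontinuity of the torus root count within a family of fixed supports, the number of isolated solutions is $<M$.

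The step I expect to be the main obstacle is exactly this converse construction together with its multiplicity bookkeeping: promoting a bare face-system solution to a genuine escaping path of the homotopy --- and, dually, checking that escaping paths never coalesce onto isolated torus roots of $f$ --- requires controlling how branches of the homotopy curve are attracted to positive-dimensional components and to non-reduced structure ``at infinity'', which is the subtle part of the original argument in~\cite{B75}. For the use made of this theorem in the sequel only the easy half is needed, namely that the mere existence of a solution of some face system in $(\CC^*)^n$ already prevents the mixed volume from being attained, so even a lossy version of the converse would suffice for our purposes.
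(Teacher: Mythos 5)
The paper does not actually prove this statement: Theorem~\ref{Bernstein2nd} is quoted from Bernstein \cite{B75} as an imported classical result, so there is no internal proof to compare yours against and your attempt must be judged as a proof of the cited theorem itself. The first half of your argument --- Puiseux expansion of the branches of the coefficient homotopy near $t=1$, plus the leading-term computation showing that an escaping branch (one with $w\neq 0$) forces $(\partial_w f_i)(c)=0$ for all $i$ with $c\in(\CC^*)^n$ --- is the standard route and is essentially sound, modulo routine facts used silently (finiteness of the bad set of $t$ for generic $g$; conservation of number, i.e.\ that each isolated torus root of $f$ of local multiplicity $m$ attracts exactly $m$ branches; and, for positive-dimensional components, that a curve germ approaching the toric boundary can be chosen to limit into the open orbit of some face, so that its leading coefficient vector really lies in $(\CC^*)^n$).

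The genuine gap is in the converse, which you flag yourself but then understate by calling it ``the easy half.'' Two concrete problems. First, a root $c\in(\CC^*)^n$ of some face system $\partial_v f$ does not obviously lift to an escaping branch of your homotopy: the rescaling $x\mapsto\tau^v x$ degenerates $f$ to $\partial_v f$, but that is a different one-parameter family from $H(x,t)=(1-t)g+tf$, and promoting $c$ to a Puiseux branch of $H=0$ with leading exponent $v$ would need a nondegeneracy of the face system at $c$ that you are not given (its zero set may be positive-dimensional or non-reduced, and $c$ need not be a limit of any branch). Second, ``lower semicontinuity of the torus root count'' can only ever yield $\le M$; the strict inequality must come from exhibiting at least one genuinely wasted branch, which is exactly the unproved step. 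This matters for the paper at hand: the sole use of Theorem~\ref{Bernstein2nd} is to conclude, from an explicit face-system root of system~(\ref{Esystem}), that its mixed volume is \emph{strictly} larger than the number of torus solutions --- precisely the clause your argument does not establish --- so the ``lossy version'' you offer as a fallback would not cover the paper's actual application.
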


This theorem was used to study planar embeddings \cite{ST08}; we shall apply it to $\RR^3$.

Now, we describe how we make use of mixed volume in our implementations.
In order to bound the number of embeddings of rigid graphs, we have
developed specialized software that constructs all Laman graphs 
and all 1-skeleta of simplicial polyhedra in $\RR^3$ with $n\le 10$.
Our computational platform is SAGE\footnote{\texttt{http://www.sagemath.org/}}.
We exploit the fact that these graphs admit inductive constructions, and
construct all graphs using the  Henneberg steps.
The latter were implemented, using SAGE's interpreter, in Python. 
After we construct all the graphs, we classify them up to isomorhism 
using SAGE's interface for N.I.C.E., an open-source isomorphism check engine,
keeping for each graph the Henneberg sequence with largest number of $H_1$ steps.
 
Then, for each graph we construct a polynomial system whose real solutions express all possible embeddings,
using formulation~(\ref{Enewsys}) below.
For each system we bound the number of its (complex) solutions by computing its
mixed volume, using \cite{EmCa95}.
Notice that, by genericity, solutions have no zero coordinates.
%
For every Laman graph, to discard translations and rotations,
we assume that one edge is of unit length, 
aligned with an axis, with one of its vertices at the origin.
In $\RR^3$, a third vertex is also fixed so as to belong to a coordinate plane.
The corresponding coordinates are given specific values and are no longer unknowns.
Depending on the choice of the fixed edge, we obtain different systems hence different
mixed volumes. Since they all bound the actual number of embeddings,
we use the minimum mixed volume.  

We used an Intel Core2, at 2.4GHz, with 2GB of RAM.
We tested more that $20, 000$ graphs and computed the mixed 
volume of more than $40, 000$ polynomial systems.
The total time of experiments was about 2 days.
Tables~\ref{tab:2D-constructions} and \ref{tab:3D-constructions} summarize our results. 

\section{Spatial embeddings of 1-skeleta of simplicial polyhedra} \label{Sgenerate3d}

This section extends the previous results to 1-skeleta of (convex) simplicial polyhedra,
which are minimally rigid in $\RR^3$ \cite{G75}.
For such a graph $(V,E)$, we have $|E|=3|V|-6$ and
all of the induced subgraphs on $k<|V|$ vertices have $\le 3k-6$ edges.

Consider any $k+2$ vertices forming a cycle with $\ge k-1$ diagonals, $k\ge 1$.
The extended Henneberg-$k$ step (or $H_k$), $k=1,2,3$, corresponds to
adding a vertex, connecting it to the $k+2$ vertices, and
removing $k-1$ diagonals among them, cf.\ Figure~\ref{hennsteps}.
%
It is known that a graph is the 1-skeleton of a simplicial polyhedron in $\RR^3$ iff
it has a construction that begins with the 3-simplex followed by any sequence
of $H_1, H_2, H_3$ steps \cite{BF67}.

Since 3 spheres intersect generically in two points, a $H_1$ step 
at most doubles the number of spatial embeddings and this is tight, generically.
%
In order to  discard translations and rotations, we fix a (triangular)
facet of the polytope; we choose wlog the first 3 vertices and obtain
system~(\ref{Esystem}) of dimension $3n$.
It turns out that this system does not capture the structure of the problem.

Specifically, choosing direction  $v=(0,0,0,0,0,0,0,0,0,-1,\ldots,-1)\in \mathbb{R}^{3n}$, the corresponding  face system is:
\begin{equation*} \small \left\{ \begin{array}{ll}
      x_i=a_i, y_i=b_i, z_i=c_i,&i=1,2,3 , \quad a_i,b_i,c_i\in\RR , \\
      (x_i-x_j)^2+(y_i-y_j)^2+(z_i-z_j)^2=0, \;\; & (i,j) \in E,\ i,j \not\in\{ 1,2,3 \},\\
      x_i^2+y_i^2+z_i^2=0,& (i,j) \in E:\ i \notin\{1,2,3\},\ j \in \{1,2,3\}.
    \end{array}\right.  \label{system}
\end{equation*}
This system has
$(a_1,b_1,c_1,\ldots, a_3,b_3,c_3, 1,1,\gamma\sqrt{2},\ldots, 1,1,\gamma\sqrt{2})$
$\in(\CC^*)^{3n}$ as a solution, where $\gamma=\pm\sqrt{-1}$.
According to Theorem~\ref{Bernstein2nd}, the mixed volume is not
a tight bound on the number of solutions in $(\CC^*)^{3n}$.
This was also observed, for the planar case, in \cite{ST08}.
To remove spurious solutions (at toric infinity),
we introduce variables $w_i=x_i^2+y_i^2+z_i^2$, for $i=1,\ldots,n$.
This yields the following equivalent system, but with lower mixed volume:
\begin{equation}\label{Enewsys} \small \left\{ \begin{array}{ll}
    x_i=a_i, y_i=b_i, z_i=c_i,&  i=1,2,3 , \\
    w_i = x_i^2 + y_i^2 + z_i^2,&  i=1,\ldots,n, \\
    w_i+w_j-2x_ix_j-2y_iy_j-2z_iz_j = l_{ij}^2, \;\; & (i,j)\in E-\{(1,2),(1,3),(2,3)\}.
  \end{array} \right.
\end{equation}
This is the formulation we will use for our computations.
 
For $n=4$, the only simplicial polytope is the 3-simplex, which 
admits 2 embeddings.  
For $n=5$, there is a unique graph that corresponds to a  1-skeleton of a simplicial polyhedron, see Figure~\ref{spatial5}
\cite{BF67}.
This graph is obtained from the 3-simplex trough a $H_1$ step, so for $n=4$ the bound is 4 and it is tight.

\begin{theorem}\label{16embeddings}
  The 1-skeleton of a simplicial polyhedron on 6 vertices has at most 16     embeddings
  and this bound is tight.
\end{theorem}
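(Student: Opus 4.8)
The plan is to turn the statement into a small, completely explicit case analysis, using the classification of simplicial $3$-polytopes on few vertices. It is classical (see \cite{BF67}, and it is also precisely what the construction-and-isomorphism pipeline of Section~\ref{sec:abid} outputs when run from the unique $5$-vertex graph of Figure~\ref{spatial5}) that there are exactly two $1$-skeleta of simplicial polyhedra on $6$ vertices: the octahedron $K_{2,2,2}$, which is $4$-regular, and one further graph $G'$. Since a $6$-vertex simplicial polytope has $|E|=12$, hence degree sum $24$, any such graph with all degrees $\ge 4$ is forced to be $4$-regular and therefore the octahedron; so $G'$, not being the octahedron, must have a vertex of degree $3$. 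It then suffices to bound the number of spatial embeddings of $K_{2,2,2}$ and of $G'$ separately.

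For $G'$: the three neighbors of its degree-$3$ vertex span a triangular face (the link of a degree-$3$ vertex is a $3$-cycle), so deleting that vertex is the inverse of an $H_1$ step and returns a $5$-vertex simplicial-polytope skeleton, necessarily the unique one of Figure~\ref{spatial5}. That graph is in turn one $H_1$ step away from the $3$-simplex, which has $2$ embeddings. As each $H_1$ step at most doubles the number of spatial embeddings (three spheres meet generically in two points), $G'$ has at most $2\cdot 2\cdot 2 = 8 < 16$ embeddings, so it cannot attain the maximum and can be discarded.

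For $K_{2,2,2}$: I would observe that this is exactly the cyclohexane graph, namely the $6$-cycle of bonds together with the six fixed $1$--$3$ ("short diagonal") distances, whose complement is the three long diagonals. By \cite{EM99} it admits exactly $16$ spatial embeddings, which simultaneously supplies the upper bound $\le 16$ for this graph and shows that the value $16$ is attained. For completeness, and consistently with the computational part of the paper, the same upper bound is reconfirmed by forming system~(\ref{Enewsys}) for $K_{2,2,2}$ and evaluating its mixed volume, which equals $16$; since by genericity every embedding is an isolated root in $(\CC^*)^{3n}$, this alone bounds the number of embeddings by $16$ independently of \cite{EM99}.

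Combining the two cases yields $\max(8,16)=16$, realized by the octahedron/cyclohexane graph, which proves the theorem. The only ingredient that is not a one-line $H_1$-doubling estimate or a single mixed-volume evaluation is the input classification that there are exactly two combinatorial types on $6$ vertices; pinning that down is the real content, but it is both classical and an immediate product of the enumeration already described in Section~\ref{sec:abid}, so everything else reduces to bookkeeping once it is granted.
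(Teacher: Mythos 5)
Your proof is correct and reaches the same two-graph case analysis as the paper, but it handles the non-octahedral graph by a genuinely different, more elementary route. The paper simply reports a mixed-volume computation of $8$ for that graph ($G_1$ in its notation) and a mixed-volume computation of $16$ for the octahedron $G_2$ (after fixing one facet, using symmetry), then invokes \cite{EM99} only for tightness. You instead observe that the non-octahedral graph must contain a degree-$3$ vertex (degree-sum $24$ on $6$ vertices forces this for any non-$4$-regular skeleton), peel it off as an inverse $H_1$ step down to the unique $5$-vertex skeleton and then to the simplex, and conclude the bound $2\cdot2\cdot2=8$ purely from the ``three spheres meet in two points'' doubling argument---no computation needed, and arguably more illuminating about \emph{why} that graph cannot compete. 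For the octahedron your argument coincides with the paper's (mixed volume $16$ plus the cyclohexane identification), though you should be slightly careful about leaning on \cite{EM99} alone for the \emph{upper} bound of $16$ on that graph: the paper treats that reference as supplying the lower bound and relies on the mixed volume for the upper bound, and since you include the mixed-volume evaluation anyway, your argument is safe either way. The one ingredient both proofs take as input is the classification into exactly two combinatorial types on $6$ vertices, which you correctly attribute to \cite{BF67} and to the enumeration pipeline of Section~\ref{sec:abid}.
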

  
  \begin{proof}
  There are two non-isomorphic graphs $G_1,G_2$ for $n=6$ \cite{BF67},
  see Figure~\ref{fig:1-skeleta-v6}.
For $G_1$, the mixed volume is 8.
Since all facets of $G_2$ are symmetric, we fix one and compute the
mixed volume, which equals 16, so the upper bound is 16.
$G_2$ is the graph of the cyclohexane, which admits 16 different 
Euclidean embeddings \cite{EM99}.
To see the equivalence, recall that the cyclohexane is essentially a 6-cycle
(fig. \ref{cyclohexane configuration})
with known lengths between vertices at distance 1 (adjacent) and 2.
The former are bond lengths whereas the latter are specified
by the angle between consecutive bonds.\end{proof}
 
\if 0
Our software constructed all 1-skeleta of simplicial polyhedra on $n= 7$ vertices
and computed the respective mixed volume, thus showing that they have at most 32 embeddings.
The graph obtained by the cyclohexane through $H_1$ provides a matching lower bound.
\fi
We now pass to general $n$ and establish the first  lower bound in $\RR^3$.

\begin{theorem}\label{Tcyclocaterp}
There exist edge lengths for which the cyclohexane caterpillar
  construction has $16^{\lfloor (n-3)/3 \rfloor} \simeq 2.52^n$ embeddings, for $ n\ge 9$.
\end{theorem}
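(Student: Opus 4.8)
The plan is to generalize the planar caterpillar construction to $\RR^3$, using the cyclohexane graph $G_2$ (with its 16 embeddings) as the repeating ``block'' in place of the Desargues graph. First I would describe the \emph{cyclohexane caterpillar}: start from one copy of the cyclohexane graph on $6$ vertices; then, for each subsequent block, glue a new copy of the cyclohexane along a shared triangular facet (three vertices, three edges), so that each new block contributes exactly $3$ new vertices. After $k$ blocks the graph has $n = 6 + 3(k-1) = 3k+3$ vertices, i.e.\ $k = (n-3)/3$ blocks, which explains the floor $\lfloor (n-3)/3\rfloor$ when $n$ is not of this exact form (the leftover one or two vertices are absorbed by $H_1$ steps, which do not decrease the count). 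I would check that each gluing is a legitimate operation producing a $1$-skeleton of a simplicial polyhedron, or at least a generically minimally rigid graph in $\RR^3$ with $|E| = 3|V|-6$; this is where I would lean on the inductive $H_k$ characterization from \cite{BF67} and Section~\ref{Sgenerate3d}, realizing the gluing as a sequence of extended Henneberg steps that add the three new vertices of a block.

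The core counting argument is a multiplicativity claim: if we fix a generic embedding of the first $j$ blocks, the shared triangle with block $j+1$ is then rigidly placed in $\RR^3$, and the remaining three vertices of block $j+1$ can be positioned in exactly $16$ ways (the cyclohexane count of Theorem~\ref{16embeddings}), \emph{independently} of the choices made in the earlier blocks, because the only edges connecting block $j+1$ to the rest of the graph are the three edges of the shared facet. Hence the number of embeddings is at least $16^{k} = 16^{\lfloor (n-3)/3\rfloor}$ for a suitable (generic within the Desargues/cyclohexane-realizable locus) choice of edge lengths; numerically $16^{1/3} \approx 2.5198$, giving the stated $2.52^n$. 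I would present this as: choose the edge lengths of the first cyclohexane so it has all $16$ real embeddings (this is the content of \cite{EM99} / Theorem~\ref{16embeddings}); then argue by induction on $k$ that the full system's real solution set surjects, with fibers of size $16$, onto that of the first $k-1$ blocks, so $\#\text{embeddings} \ge 16 \cdot \#\text{(embeddings of first }k-1\text{ blocks)}$.

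The main obstacle is the \emph{reality} and \emph{genericity} of the glued construction: it is not enough that each cyclohexane block has $16$ complex embeddings; I need a single choice of edge lengths for \emph{all} blocks simultaneously such that all $16^k$ combinations are realized by \emph{real}, non-degenerate point configurations, and such that the shared triangle is never forced into a degenerate (collinear) position that would break rigidity of the next block or collapse embeddings together. The clean way to handle this is to use \emph{the same} edge lengths in every block (so every block is a congruent copy of the cyclohexane that attains its $16$ real embeddings) and to verify that each of the $16$ real embeddings of a block places the three ``new'' vertices generically relative to the shared facet — in particular the shared facet is a genuine (non-degenerate) triangle in every embedding, so the inductive gluing never degenerates. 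One should also confirm that distinct combinations of per-block choices yield genuinely distinct embeddings of $G$ (immediate, since the blocks occupy disjoint vertex sets apart from the shared triangles) and that no accidental congruence identifies two of them modulo rigid motions (handled by fixing the first facet, as in Section~\ref{Sgenerate3d}). I expect the bookkeeping for non-multiples of $3$ and the careful statement of genericity to be the only remaining fussy points.
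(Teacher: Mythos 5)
Your proposal follows essentially the same route as the paper's proof: glue copies of the cyclohexane along shared triangular facets so that each new copy contributes $3$ vertices and a multiplicative factor of $16$, yielding $16^{\lfloor (n-3)/3\rfloor}$ embeddings. The paper's own argument is only a brief sketch of this, so your additional attention to the multiplicativity of the count across blocks and to the reality/non-degeneracy of the glued configurations fills in details the paper leaves implicit, but it is not a different approach.
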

\begin{proof}
  We glue together copies of cyclohexanes sharing a common triangle.   The resulting graph is the 1-skeleton of a simplicial polytope. Each new copy adds 3 vertices,
and since there exist edge lengths for which the cyclohexane graph has  16 embeddings the claim follows.  An example with 2 copies is illustrated in Figure~\ref{fig:cyclohexane}.

\end{proof}

Table~\ref{tab:3D-constructions} summarizes our results for $n \le 10$.
The upper bounds for $n=7,\dots,10$ are computed by our software.
The lower bound for $n=9$ follows from the cyclohexane caterpillar.
All  other lower bounds are 
obtained by applying a $H_1$ step to a graph with one fewer vertex.
Lastly, we state without proof  a result similar to Lemma~\ref{Lsparse2}. 

\begin{lemma}\label{Lsparse3}
  Let $G$ be the 1-skeleton of a simplicial polyhedron with $k \ge 9$ 
  degree-3 vertices.
  Then the number of embeddings of $G$ is bounded above by $2^{k-9}  8^{n-k}$.
\end{lemma}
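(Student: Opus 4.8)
The plan is to follow the template of Lemma~\ref{Lsparse2}: I would peel off the $k$ degree-$3$ vertices one at a time, charging a factor of at most two per deletion, and finish with a B\'ezout bound on what remains. The one point that needs genuine care is that the $k$ degree-$3$ vertices do not interfere with one another during the peeling. So I would first record the combinatorial fact that, in the $1$-skeleton of a simplicial polyhedron on $n\ge 5$ vertices, no two degree-$3$ vertices are adjacent. Here is the idea: the link of a degree-$3$ vertex $v$ is a triangle on $N(v)=\{a,b,c\}$, so $a,b,c$ are pairwise adjacent and the only facets through $v$ are $vab$, $vac$, $vbc$. If a neighbour, say $a$, also had degree $3$, then $N(a)=\{v,b,c\}$ as well, so $\{v,a,b,c\}$ induces a $K_4$ whose four triangles $vab,vac,vbc,abc$ already form a closed $2$-sphere; since the boundary of the polyhedron is connected, this sphere would be the whole boundary, forcing the polyhedron to be the $3$-simplex and contradicting $n\ge 5$. (Alternatively, deleting a degree-$3$ vertex is the inverse of an $H_1$ step, hence leaves a $1$-skeleton of a simplicial polyhedron, cf.\ \cite{BF67}, which has minimum degree $3$; so a degree-$3$ neighbour of the deleted vertex would be left at degree $2$.)

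Granting this, I would delete the degree-$3$ vertices one by one. Each deletion is the reverse of an $H_1$ step, so it keeps us inside the class of $1$-skeleta of simplicial polyhedra, and by the non-adjacency fact the remaining degree-$3$ vertices are undisturbed and stay available for the next deletion; moreover $k\ge 9$ is more than enough to guarantee that $n-k\ge 4$, so the reduction never drops below the $3$-simplex. Since three spheres meet generically in at most two points, an $H_1$ step at most doubles the number of spatial embeddings; read backwards, if $v$ has degree $3$ then $G$ has at most twice as many embeddings as $G-v$. Iterating over all $k$ vertices, the number of spatial embeddings of $G$ is at most $2^{k}$ times that of a $1$-skeleton $G'$ of a simplicial polyhedron on $n-k$ vertices.

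Finally I would bound the embeddings of $G'$ by the B\'ezout number of its edge-length system (formulation~(\ref{Enewsys}), equivalently~(\ref{Esystem})), which for $m$ vertices equals $8^{\,m-3}$ as recalled in the Introduction; this gives at most $2^{k}\cdot 8^{\,(n-k)-3}=2^{k-9}\,8^{\,n-k}$ embeddings, the ``$-9$'' absorbing the ``$-3$'' in $8^{\,m-3}$ just as the ``$-4$'' absorbs the ``$-2$'' in $4^{\,m-2}$ in Lemma~\ref{Lsparse2}. The main obstacle, as in the planar case but slightly more delicate in $\RR^3$, is the structural claim of the first paragraph: that removing a degree-$3$ vertex really does leave a $1$-skeleton of a simplicial polyhedron and, in particular, does not lower the degree of the other degree-$3$ vertices. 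Once that is secured, everything else is the same bookkeeping as in Lemma~\ref{Lsparse2}.
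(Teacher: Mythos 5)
Your argument is correct and is exactly the intended one: the paper states Lemma~\ref{Lsparse3} without proof as the spatial analogue of Lemma~\ref{Lsparse2}, whose proof is precisely this peel-off-the-low-degree-vertices-then-apply-B\'ezout scheme, giving $2^{k}\cdot 8^{(n-k)-3}=2^{k-9}8^{n-k}$. You additionally supply the structural fact the paper leaves implicit --- that in the 1-skeleton of a simplicial polyhedron on $\ge 5$ vertices no two degree-3 vertices are adjacent, so the removals do not interfere and the reduced graph stays in the class --- which is a welcome filling-in of detail rather than a deviation.
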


\section{Further work}

Undoubtedly, the most important and oldest problem in rigidity theory
is the combinatorial characterization of rigid graphs in $\mathbb{R}^3$.
In the planar case, existing bounds are not tight despite our earlier
attempt to close all gaps up to $n=8$. This is due to the fact that
root counts include rotated copies of certain embeddings. We expect
that a more careful approach should settle these cases.
Since we deal with Henneberg constructions,
it is important to determine the effect of each step on the number of embeddings:
a $H_1$ step always doubles their number;
we {\em conjecture} that $H_2$ multiplies it by $\le 4$ and spatial $H_3$ by $\le 8$,
but these may not always be tight.
Our conjecture has been verified for small $n$.
Here, mixed volume may help; it is also challenging to understand
its relevance, despite the fact it is a bound on complex roots.
As for lower bounds, specific graphs with a large number of
embeddings may be combined to yield tighter bounds.

\medskip

\noindent
\textbf{Acknowledgement.}
A.~V.\ thanks G\"unter Rote for insightful discussions on our conjecture. 
A.~V. acknowledges partial support 
by IST Programme of the EU as a Shared-cost RTD (FET Open) Project under Contract 
No IST-006413-2 (ACS - Algorithms for Complex Shapes).
I.~E.\ is partially supported by FP7 contract PITN-GA-2008-214584 "SAGA''; part of this work was done while he was on sabbatical
leave at "Ecole Normale de Paris" and "Ecole Centrale de Paris".
E.~T.\ is partially supported by contract ANR-06-BLAN-0074 "Decotes''. All authors thank D.~Walter for insightful discussions 
on the number of embeddings of the $K_{3,3}$.
\bibliographystyle{plain}
\bibliography{emiris,rigidity,diplomatiki}


\section*{Appendix}

\begin{table}[h]
  \centering  $ \begin{array}{c||c|c|c|c||c|c|c|c}
    n = & 3& 4 & 5 & 6 & 7 & 8& 9 & 10\\
    \hline \hline
    \mbox{lower} & 2& 4 & 8 & 24 & 48 & 96& 288 & 576 \\
    \hline \hline
    \mbox{upper}& 2& 4 & 8 & 24 & 64 & 128& 512 & 2048\\
    \hline \hline
    H_1 & \s &\s1 & \s11& \s 111 & \s1^4 & \s1^5& \s1^6&\\
    \hline
    &         &       &          &\mathbf{ \s112} & \mathbf{\s1^32} & \mathbf{\s1^42} & \s1^52 &\\
    H_2  &&&&&&& \s1^421&\\
    &&&&&&& \mathbf{\s1^422} &\\
    \hline
  \end{array} $ 
\medskip
  \caption{Bounds and Henneberg sequences for Laman graphs for $n\le 10$. 
Bold text indicates the Henneberg sequence of the graph yielding the upper bound.}
    \label{tab:2D-constructions}
\end{table}

\begin{table}[h]
  \centering  $ 
\begin{array}{c||c|c||c|c|c|c|c}
      n = & 4 & 5 &6 &7 & 8 & 9  &10\\
      \hline
      \hline
      \mbox{lower}&2&4 & 16 & 32 & 64 & 256 & 512 \\
      \hline
      \hline
      \mbox{upper} &2& 4& 16 & 32 & 160 & 640 & 2560  \\
      \hline
      \hline
      H_1 & \s & \s 1& \s 11& \s111& \s 1^4 &\s 1^5 & \s 1^6\\
      \hline
      &&& \mathbf{\s 12} &\mathbf{ \s1^22} & \s1^32&\s 1^42  &\s1^52\\
      &&&&& \mathbf{\s 1^22^2} & \s 1^3 2^2&\s1^42^2\\
      H_2 &&&&&\s 1^221&\s1^321&\s1^421\\
      &&&&&& \mathbf{\s 1^22^3}&\s1^32^3\\
      &&&&& &&\s1^321^2\\
      &&&&&&&\s1^3212\\
      &&&&&&&\s1^32^21\\
      &&&&&&& \mathbf{\s1^22^4}\\
      \hline
      H_3 &&&&&&&\\
      \hline
    \end{array} $ 
\medskip
  \caption{Bounds and Henneberg sequences for 1-skeleta of simplicial polyhedra for
$n\le 10$, where $\s$ is the 3-simplex.  $H_3$ need not apply before $n=13$.
Bold text indicates the Henneberg sequence of the graph that yields  the upper bound.}
  \label{tab:3D-constructions}
\end{table} 

\begin{figure}[h]
  \begin{minipage}[b]{0.5\linewidth} 
    \centering
    \includegraphics{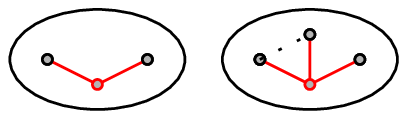}
    \caption{The planar Henneberg steps; the bottom vertex is new.}
    \label{planar Henneberg}    
  \end{minipage}
  \hspace{0.1cm} 
  \begin{minipage}[b]{0.5\linewidth}
    \centering
    \includegraphics{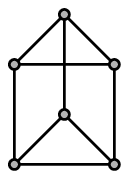}
    \caption{The Desargues graph.}
  \label{desargues graph}    
  \end{minipage}
\end{figure}

\begin{figure}[h]
    \centering
    \begin{pgfpicture}
        \color[rgb]{0,0,0}\pgfsetlinewidth{0.254mm}\pgfline{\pgfpoint{20.32mm}{-10.16mm}}{\pgfpoint{6.35mm}{-24.13mm}} 
        \color[rgb]{0,0,0}\pgfsetlinewidth{0.254mm}\pgfline{\pgfpoint{20.32mm}{-10.16mm}}{\pgfpoint{34.29mm}{-24.13mm}} 
        \color[rgb]{0,0,0}\pgfsetlinewidth{0.254mm}\pgfline{\pgfpoint{34.29mm}{-24.13mm}}{\pgfpoint{6.35mm}{-24.13mm}} 
        \color[rgb]{0,0,0}\pgfsetlinewidth{0.254mm}\pgfline{\pgfpoint{50.8mm}{-10.16mm}}{\pgfpoint{36.83mm}{-24.13mm}} 
        \color[rgb]{0,0,0}\pgfsetlinewidth{0.254mm}\pgfline{\pgfpoint{50.8mm}{-10.16mm}}{\pgfpoint{64.77mm}{-24.13mm}} 
        \color[rgb]{0,0,0}\pgfsetlinewidth{0.254mm}\pgfline{\pgfpoint{64.77mm}{-24.13mm}}{\pgfpoint{36.83mm}{-24.13mm}} 
        \color[rgb]{0,0,0}\pgfsetlinewidth{0.254mm}\pgfline{\pgfpoint{20.32mm}{-21.59mm}}{\pgfpoint{6.35mm}{-24.13mm}} 
        \color[rgb]{0,0,0}\pgfsetlinewidth{0.254mm}\pgfline{\pgfpoint{20.32mm}{-21.59mm}}{\pgfpoint{34.29mm}{-24.13mm}} 
        \color[rgb]{0,0,0}\pgfsetlinewidth{0.254mm}\pgfline{\pgfpoint{20.32mm}{-21.59mm}}{\pgfpoint{20.32mm}{-10.16mm}} 
        \color[rgb]{0,0,0}\pgfsetlinewidth{0.254mm}\pgfline{\pgfpoint{20.32mm}{-17.78mm}}{\pgfpoint{6.35mm}{-24.13mm}} 
        \color[rgb]{0,0,0}\pgfsetlinewidth{0.254mm}\pgfline{\pgfpoint{20.32mm}{-13.97mm}}{\pgfpoint{6.35mm}{-24.13mm}} 
        \color[rgb]{0,0,0}\pgfsetlinewidth{0.254mm}\pgfline{\pgfpoint{20.32mm}{-17.78mm}}{\pgfpoint{34.29mm}{-24.13mm}} 
        \color[rgb]{0,0,0}\pgfsetlinewidth{0.254mm}\pgfline{\pgfpoint{20.32mm}{-13.97mm}}{\pgfpoint{34.29mm}{-24.13mm}} 
        \color[rgb]{0,0,0}\pgfsetlinewidth{0.254mm}\pgfline{\pgfpoint{50.8mm}{-10.16mm}}{\pgfpoint{48.26mm}{-17.78mm}} 
        \color[rgb]{0,0,0}\pgfsetlinewidth{0.254mm}\pgfline{\pgfpoint{50.8mm}{-10.16mm}}{\pgfpoint{53.34mm}{-17.78mm}} 
        \color[rgb]{0,0,0}\pgfsetlinewidth{0.254mm}\pgfline{\pgfpoint{48.26mm}{-17.78mm}}{\pgfpoint{53.34mm}{-17.78mm}} 
        \color[rgb]{0,0,0}\pgfsetlinewidth{0.254mm}\pgfline{\pgfpoint{53.34mm}{-17.78mm}}{\pgfpoint{50.8mm}{-20.32mm}} 
        \color[rgb]{0,0,0}\pgfsetlinewidth{0.254mm}\pgfline{\pgfpoint{50.8mm}{-20.32mm}}{\pgfpoint{48.26mm}{-17.78mm}} 
        \color[rgb]{0,0,0}\pgfsetlinewidth{0.254mm}\pgfline{\pgfpoint{48.26mm}{-17.78mm}}{\pgfpoint{36.83mm}{-24.13mm}} 
        \color[rgb]{0,0,0}\pgfsetlinewidth{0.254mm}\pgfline{\pgfpoint{50.8mm}{-20.32mm}}{\pgfpoint{36.83mm}{-24.13mm}} 
        \color[rgb]{0,0,0}\pgfsetlinewidth{0.254mm}\pgfline{\pgfpoint{50.8mm}{-20.32mm}}{\pgfpoint{64.77mm}{-24.13mm}} 
        \color[rgb]{0,0,0}\pgfsetlinewidth{0.254mm}\pgfline{\pgfpoint{53.34mm}{-17.78mm}}{\pgfpoint{64.77mm}{-24.13mm}} 
        \color[gray]{0.7}\pgfellipse[fill]{\pgfpoint{20.32mm}{-10.16mm}}{\pgfpoint{0.508mm}{0mm}}{\pgfpoint{0mm}{0.508mm}}\color[rgb]{0,0,0}\pgfsetlinewidth{0.254mm}\pgfellipse[stroke]{\pgfpoint{20.32mm}{-10.16mm}}{\pgfpoint{0.508mm}{0mm}}{\pgfpoint{0mm}{0.508mm}} 
        \color[gray]{0.7}\pgfellipse[fill]{\pgfpoint{20.32mm}{-13.97mm}}{\pgfpoint{0.508mm}{0mm}}{\pgfpoint{0mm}{0.508mm}}\color[rgb]{0,0,0}\pgfsetlinewidth{0.254mm}\pgfellipse[stroke]{\pgfpoint{20.32mm}{-13.97mm}}{\pgfpoint{0.508mm}{0mm}}{\pgfpoint{0mm}{0.508mm}} 
        \color[gray]{0.7}\pgfellipse[fill]{\pgfpoint{20.32mm}{-17.78mm}}{\pgfpoint{0.508mm}{0mm}}{\pgfpoint{0mm}{0.508mm}}\color[rgb]{0,0,0}\pgfsetlinewidth{0.254mm}\pgfellipse[stroke]{\pgfpoint{20.32mm}{-17.78mm}}{\pgfpoint{0.508mm}{0mm}}{\pgfpoint{0mm}{0.508mm}} 
        \color[gray]{0.7}\pgfellipse[fill]{\pgfpoint{20.32mm}{-21.59mm}}{\pgfpoint{0.508mm}{0mm}}{\pgfpoint{0mm}{0.508mm}}\color[rgb]{0,0,0}\pgfsetlinewidth{0.254mm}\pgfellipse[stroke]{\pgfpoint{20.32mm}{-21.59mm}}{\pgfpoint{0.508mm}{0mm}}{\pgfpoint{0mm}{0.508mm}} 
        \color[gray]{0.7}\pgfellipse[fill]{\pgfpoint{6.35mm}{-24.13mm}}{\pgfpoint{0.508mm}{0mm}}{\pgfpoint{0mm}{0.508mm}}\color[rgb]{0,0,0}\pgfsetlinewidth{0.254mm}\pgfellipse[stroke]{\pgfpoint{6.35mm}{-24.13mm}}{\pgfpoint{0.508mm}{0mm}}{\pgfpoint{0mm}{0.508mm}} 
        \color[gray]{0.7}\pgfellipse[fill]{\pgfpoint{34.29mm}{-24.13mm}}{\pgfpoint{0.508mm}{0mm}}{\pgfpoint{0mm}{0.508mm}}\color[rgb]{0,0,0}\pgfsetlinewidth{0.254mm}\pgfellipse[stroke]{\pgfpoint{34.29mm}{-24.13mm}}{\pgfpoint{0.508mm}{0mm}}{\pgfpoint{0mm}{0.508mm}} 
        \color[gray]{0.7}\pgfellipse[fill]{\pgfpoint{36.83mm}{-24.13mm}}{\pgfpoint{0.508mm}{0mm}}{\pgfpoint{0mm}{0.508mm}}\color[rgb]{0,0,0}\pgfsetlinewidth{0.254mm}\pgfellipse[stroke]{\pgfpoint{36.83mm}{-24.13mm}}{\pgfpoint{0.508mm}{0mm}}{\pgfpoint{0mm}{0.508mm}} 
        \color[gray]{0.7}\pgfellipse[fill]{\pgfpoint{50.8mm}{-20.32mm}}{\pgfpoint{0.508mm}{0mm}}{\pgfpoint{0mm}{0.508mm}}\color[rgb]{0,0,0}\pgfsetlinewidth{0.254mm}\pgfellipse[stroke]{\pgfpoint{50.8mm}{-20.32mm}}{\pgfpoint{0.508mm}{0mm}}{\pgfpoint{0mm}{0.508mm}} 
        \color[gray]{0.7}\pgfellipse[fill]{\pgfpoint{48.26mm}{-17.78mm}}{\pgfpoint{0.508mm}{0mm}}{\pgfpoint{0mm}{0.508mm}}\color[rgb]{0,0,0}\pgfsetlinewidth{0.254mm}\pgfellipse[stroke]{\pgfpoint{48.26mm}{-17.78mm}}{\pgfpoint{0.508mm}{0mm}}{\pgfpoint{0mm}{0.508mm}} 
        \color[gray]{0.7}\pgfellipse[fill]{\pgfpoint{53.34mm}{-17.78mm}}{\pgfpoint{0.508mm}{0mm}}{\pgfpoint{0mm}{0.508mm}}\color[rgb]{0,0,0}\pgfsetlinewidth{0.254mm}\pgfellipse[stroke]{\pgfpoint{53.34mm}{-17.78mm}}{\pgfpoint{0.508mm}{0mm}}{\pgfpoint{0mm}{0.508mm}} 
        \color[gray]{0.7}\pgfellipse[fill]{\pgfpoint{50.8mm}{-10.16mm}}{\pgfpoint{0.508mm}{0mm}}{\pgfpoint{0mm}{0.508mm}}\color[rgb]{0,0,0}\pgfsetlinewidth{0.254mm}\pgfellipse[stroke]{\pgfpoint{50.8mm}{-10.16mm}}{\pgfpoint{0.508mm}{0mm}}{\pgfpoint{0mm}{0.508mm}} 
        \color[gray]{0.7}\pgfellipse[fill]{\pgfpoint{64.77mm}{-24.13mm}}{\pgfpoint{0.508mm}{0mm}}{\pgfpoint{0mm}{0.508mm}}\color[rgb]{0,0,0}\pgfsetlinewidth{0.254mm}\pgfellipse[stroke]{\pgfpoint{64.77mm}{-24.13mm}}{\pgfpoint{0.508mm}{0mm}}{\pgfpoint{0mm}{0.508mm}} 
        \color[rgb]{0,0,0} 
\end{pgfpicture}    \caption{All 1-skeleta of convex simplicial polyhedra on 6 vertices.}
    \label{fig:1-skeleta-v6}
  \end{figure}

 \begin{figure}[h]
  \begin{minipage}[h]{0.5\linewidth} 
    \centering
  \begin{pgfpicture}
        \color[rgb]{0,0,0}\pgfsetlinewidth{0.254mm}\pgfline{\pgfpoint{20.32mm}{-10.16mm}}{\pgfpoint{10.16mm}{-20.32mm}}
        \color[rgb]{0,0,0}\pgfsetlinewidth{0.254mm}\pgfline{\pgfpoint{20.32mm}{-10.16mm}}{\pgfpoint{30.48mm}{-20.32mm}}
        \color[rgb]{0,0,0}\pgfsetlinewidth{0.254mm}\pgfline{\pgfpoint{30.48mm}{-20.32mm}}{\pgfpoint{10.16mm}{-20.32mm}}
        \color[rgb]{0,0,0}\pgfsetlinewidth{0.254mm}\pgfline{\pgfpoint{10.16mm}{-30.48mm}}{\pgfpoint{30.48mm}{-30.48mm}}
        \color[rgb]{0,0,0}\pgfsetlinewidth{0.254mm}\pgfline{\pgfpoint{30.48mm}{-30.48mm}}{\pgfpoint{30.48mm}{-40.64mm}}
        \color[rgb]{0,0,0}\pgfsetlinewidth{0.254mm}\pgfline{\pgfpoint{30.48mm}{-40.64mm}}{\pgfpoint{10.16mm}{-40.64mm}}
        \color[rgb]{0,0,0}\pgfsetlinewidth{0.254mm}\pgfline{\pgfpoint{10.16mm}{-40.64mm}}{\pgfpoint{10.16mm}{-30.48mm}}
        \color[rgb]{0,0,0}\pgfsetlinewidth{0.254mm}\pgfline{\pgfpoint{20.32mm}{-50.8mm}}{\pgfpoint{10.16mm}{-55.88mm}}
        \color[rgb]{0,0,0}\pgfsetlinewidth{0.254mm}\pgfline{\pgfpoint{20.32mm}{-50.8mm}}{\pgfpoint{30.48mm}{-55.88mm}}
        \color[rgb]{0,0,0}\pgfsetlinewidth{0.254mm}\pgfline{\pgfpoint{30.48mm}{-55.88mm}}{\pgfpoint{27.94mm}{-63.5mm}}
        \color[rgb]{0,0,0}\pgfsetlinewidth{0.254mm}\pgfline{\pgfpoint{27.94mm}{-63.5mm}}{\pgfpoint{12.7mm}{-63.5mm}}
        \color[rgb]{0,0,0}\pgfsetlinewidth{0.254mm}\pgfline{\pgfpoint{12.7mm}{-63.5mm}}{\pgfpoint{10.16mm}{-55.88mm}}
        \color[rgb]{0,0,0}\pgfsetlinewidth{0.254mm}\pgfline{\pgfpoint{10.16mm}{-30.48mm}}{\pgfpoint{30.48mm}{-40.64mm}}
        \color[rgb]{0,0,0}\pgfsetlinewidth{0.254mm}\pgfline{\pgfpoint{20.32mm}{-50.8mm}}{\pgfpoint{27.94mm}{-63.5mm}}
        \color[rgb]{0,0,0}\pgfsetlinewidth{0.254mm}\pgfline{\pgfpoint{12.7mm}{-63.5mm}}{\pgfpoint{20.32mm}{-50.8mm}}
        \color[rgb]{0,0,0}\pgfsetlinewidth{0.254mm}\pgfline{\pgfpoint{53.34mm}{-10.16mm}}{\pgfpoint{43.18mm}{-20.32mm}}
        \color[rgb]{0,0,0}\pgfsetlinewidth{0.254mm}\pgfline{\pgfpoint{53.34mm}{-10.16mm}}{\pgfpoint{63.5mm}{-20.32mm}}
        \color[rgb]{0,0,0}\pgfsetlinewidth{0.254mm}\pgfline{\pgfpoint{63.5mm}{-20.32mm}}{\pgfpoint{43.18mm}{-20.32mm}}
        \color[rgb]{0,0,0}\pgfsetlinewidth{0.254mm}\pgfline{\pgfpoint{43.18mm}{-30.48mm}}{\pgfpoint{63.5mm}{-30.48mm}}
        \color[rgb]{0,0,0}\pgfsetlinewidth{0.254mm}\pgfline{\pgfpoint{63.5mm}{-30.48mm}}{\pgfpoint{63.5mm}{-40.64mm}}
        \color[rgb]{0,0,0}\pgfsetlinewidth{0.254mm}\pgfline{\pgfpoint{63.5mm}{-40.64mm}}{\pgfpoint{43.18mm}{-40.64mm}}
        \color[rgb]{0,0,0}\pgfsetlinewidth{0.254mm}\pgfline{\pgfpoint{43.18mm}{-40.64mm}}{\pgfpoint{43.18mm}{-30.48mm}}
        \color[rgb]{0,0,0}\pgfsetlinewidth{0.254mm}\pgfline{\pgfpoint{53.34mm}{-50.8mm}}{\pgfpoint{43.18mm}{-55.88mm}}
        \color[rgb]{0,0,0}\pgfsetlinewidth{0.254mm}\pgfline{\pgfpoint{53.34mm}{-50.8mm}}{\pgfpoint{63.5mm}{-55.88mm}}
        \color[rgb]{0,0,0}\pgfsetlinewidth{0.254mm}\pgfline{\pgfpoint{63.5mm}{-55.88mm}}{\pgfpoint{60.96mm}{-63.5mm}}
        \color[rgb]{0,0,0}\pgfsetlinewidth{0.254mm}\pgfline{\pgfpoint{60.96mm}{-63.5mm}}{\pgfpoint{45.72mm}{-63.5mm}}
        \color[rgb]{0,0,0}\pgfsetlinewidth{0.254mm}\pgfline{\pgfpoint{45.72mm}{-63.5mm}}{\pgfpoint{43.18mm}{-55.88mm}}
     
        \color[rgb]{1,0,0}\pgfsetlinewidth{0.254mm}\pgfline{\pgfpoint{53.34mm}{-10.16mm}}{\pgfpoint{53.34mm}{-16.51mm}}
        \color[rgb]{1,0,0}\pgfsetlinewidth{0.254mm}\pgfline{\pgfpoint{53.34mm}{-16.51mm}}{\pgfpoint{43.18mm}{-20.32mm}}
        \color[rgb]{1,0,0}\pgfsetlinewidth{0.254mm}\pgfline{\pgfpoint{53.34mm}{-16.51mm}}{\pgfpoint{63.5mm}{-20.32mm}}
        \color[rgb]{1,0,0}\pgfsetlinewidth{0.254mm}\pgfline{\pgfpoint{63.5mm}{-30.48mm}}{\pgfpoint{50.8mm}{-36.83mm}}
        \color[rgb]{1,0,0}\pgfsetlinewidth{0.254mm}\pgfline{\pgfpoint{43.18mm}{-30.48mm}}{\pgfpoint{50.8mm}{-36.83mm}}
        \color[rgb]{1,0,0}\pgfsetlinewidth{0.254mm}\pgfline{\pgfpoint{50.8mm}{-36.83mm}}{\pgfpoint{43.18mm}{-40.64mm}}
        \color[rgb]{1,0,0}\pgfsetlinewidth{0.254mm}\pgfline{\pgfpoint{50.8mm}{-36.83mm}}{\pgfpoint{63.5mm}{-40.64mm}}
        \color[rgb]{1,0,0}\pgfsetlinewidth{0.254mm}\pgfline{\pgfpoint{53.34mm}{-55.88mm}}{\pgfpoint{53.34mm}{-50.8mm}}
        \color[rgb]{1,0,0}\pgfsetlinewidth{0.254mm}\pgfline{\pgfpoint{53.34mm}{-55.88mm}}{\pgfpoint{43.18mm}{-55.88mm}}
        \color[rgb]{1,0,0}\pgfsetlinewidth{0.254mm}\pgfline{\pgfpoint{53.34mm}{-55.88mm}}{\pgfpoint{63.5mm}{-55.88mm}}
        \color[rgb]{1,0,0}\pgfsetlinewidth{0.254mm}\pgfline{\pgfpoint{60.96mm}{-63.5mm}}{\pgfpoint{53.34mm}{-55.88mm}}
        \color[rgb]{1,0,0}\pgfsetlinewidth{0.254mm}\pgfline{\pgfpoint{53.34mm}{-55.88mm}}{\pgfpoint{45.72mm}{-63.5mm}}
        \color[gray]{0.7}\pgfellipse[fill]{\pgfpoint{50.8mm}{-36.83mm}}{\pgfpoint{0.508mm}{0mm}}{\pgfpoint{0mm}{0.508mm}}\color[rgb]{1,0,0}\pgfsetlinewidth{0.254mm}\pgfellipse[stroke]{\pgfpoint{50.8mm}{-36.83mm}}{\pgfpoint{0.508mm}{0mm}}{\pgfpoint{0mm}{0.508mm}}
        \color[gray]{0.7}\pgfellipse[fill]{\pgfpoint{53.34mm}{-16.51mm}}{\pgfpoint{0.508mm}{0mm}}{\pgfpoint{0mm}{0.508mm}}\color[rgb]{1,0,0}\pgfsetlinewidth{0.254mm}\pgfellipse[stroke]{\pgfpoint{53.34mm}{-16.51mm}}{\pgfpoint{0.508mm}{0mm}}{\pgfpoint{0mm}{0.508mm}}
        \color[gray]{0.7}\pgfellipse[fill]{\pgfpoint{53.34mm}{-55.88mm}}{\pgfpoint{0.508mm}{0mm}}{\pgfpoint{0mm}{0.508mm}}\color[rgb]{1,0,0}\pgfsetlinewidth{0.254mm}\pgfellipse[stroke]{\pgfpoint{53.34mm}{-55.88mm}}{\pgfpoint{0.508mm}{0mm}}{\pgfpoint{0mm}{0.508mm}}
        \color[gray]{0.7}\pgfellipse[fill]{\pgfpoint{20.32mm}{-10.16mm}}{\pgfpoint{0.508mm}{0mm}}{\pgfpoint{0mm}{0.508mm}}\color[rgb]{0,0,0}\pgfsetlinewidth{0.254mm}\pgfellipse[stroke]{\pgfpoint{20.32mm}{-10.16mm}}{\pgfpoint{0.508mm}{0mm}}{\pgfpoint{0mm}{0.508mm}}
        \color[gray]{0.7}\pgfellipse[fill]{\pgfpoint{10.16mm}{-20.32mm}}{\pgfpoint{0.508mm}{0mm}}{\pgfpoint{0mm}{0.508mm}}\color[rgb]{0,0,0}\pgfsetlinewidth{0.254mm}\pgfellipse[stroke]{\pgfpoint{10.16mm}{-20.32mm}}{\pgfpoint{0.508mm}{0mm}}{\pgfpoint{0mm}{0.508mm}}
        \color[gray]{0.7}\pgfellipse[fill]{\pgfpoint{30.48mm}{-20.32mm}}{\pgfpoint{0.508mm}{0mm}}{\pgfpoint{0mm}{0.508mm}}\color[rgb]{0,0,0}\pgfsetlinewidth{0.254mm}\pgfellipse[stroke]{\pgfpoint{30.48mm}{-20.32mm}}{\pgfpoint{0.508mm}{0mm}}{\pgfpoint{0mm}{0.508mm}}
        \color[gray]{0.7}\pgfellipse[fill]{\pgfpoint{30.48mm}{-30.48mm}}{\pgfpoint{0.508mm}{0mm}}{\pgfpoint{0mm}{0.508mm}}\color[rgb]{0,0,0}\pgfsetlinewidth{0.254mm}\pgfellipse[stroke]{\pgfpoint{30.48mm}{-30.48mm}}{\pgfpoint{0.508mm}{0mm}}{\pgfpoint{0mm}{0.508mm}}
        \color[gray]{0.7}\pgfellipse[fill]{\pgfpoint{10.16mm}{-30.48mm}}{\pgfpoint{0.508mm}{0mm}}{\pgfpoint{0mm}{0.508mm}}\color[rgb]{0,0,0}\pgfsetlinewidth{0.254mm}\pgfellipse[stroke]{\pgfpoint{10.16mm}{-30.48mm}}{\pgfpoint{0.508mm}{0mm}}{\pgfpoint{0mm}{0.508mm}}
        \color[gray]{0.7}\pgfellipse[fill]{\pgfpoint{10.16mm}{-40.64mm}}{\pgfpoint{0.508mm}{0mm}}{\pgfpoint{0mm}{0.508mm}}\color[rgb]{0,0,0}\pgfsetlinewidth{0.254mm}\pgfellipse[stroke]{\pgfpoint{10.16mm}{-40.64mm}}{\pgfpoint{0.508mm}{0mm}}{\pgfpoint{0mm}{0.508mm}}
        \color[gray]{0.7}\pgfellipse[fill]{\pgfpoint{30.48mm}{-40.64mm}}{\pgfpoint{0.508mm}{0mm}}{\pgfpoint{0mm}{0.508mm}}\color[rgb]{0,0,0}\pgfsetlinewidth{0.254mm}\pgfellipse[stroke]{\pgfpoint{30.48mm}{-40.64mm}}{\pgfpoint{0.508mm}{0mm}}{\pgfpoint{0mm}{0.508mm}}
        \color[gray]{0.7}\pgfellipse[fill]{\pgfpoint{20.32mm}{-50.8mm}}{\pgfpoint{0.508mm}{0mm}}{\pgfpoint{0mm}{0.508mm}}\color[rgb]{0,0,0}\pgfsetlinewidth{0.254mm}\pgfellipse[stroke]{\pgfpoint{20.32mm}{-50.8mm}}{\pgfpoint{0.508mm}{0mm}}{\pgfpoint{0mm}{0.508mm}}
        \color[gray]{0.7}\pgfellipse[fill]{\pgfpoint{10.16mm}{-55.88mm}}{\pgfpoint{0.508mm}{0mm}}{\pgfpoint{0mm}{0.508mm}}\color[rgb]{0,0,0}\pgfsetlinewidth{0.254mm}\pgfellipse[stroke]{\pgfpoint{10.16mm}{-55.88mm}}{\pgfpoint{0.508mm}{0mm}}{\pgfpoint{0mm}{0.508mm}}
        \color[gray]{0.7}\pgfellipse[fill]{\pgfpoint{12.7mm}{-63.5mm}}{\pgfpoint{0.508mm}{0mm}}{\pgfpoint{0mm}{0.508mm}}\color[rgb]{0,0,0}\pgfsetlinewidth{0.254mm}\pgfellipse[stroke]{\pgfpoint{12.7mm}{-63.5mm}}{\pgfpoint{0.508mm}{0mm}}{\pgfpoint{0mm}{0.508mm}}
        \color[gray]{0.7}\pgfellipse[fill]{\pgfpoint{27.94mm}{-63.5mm}}{\pgfpoint{0.508mm}{0mm}}{\pgfpoint{0mm}{0.508mm}}\color[rgb]{0,0,0}\pgfsetlinewidth{0.254mm}\pgfellipse[stroke]{\pgfpoint{27.94mm}{-63.5mm}}{\pgfpoint{0.508mm}{0mm}}{\pgfpoint{0mm}{0.508mm}}
        \color[gray]{0.7}\pgfellipse[fill]{\pgfpoint{30.48mm}{-55.88mm}}{\pgfpoint{0.508mm}{0mm}}{\pgfpoint{0mm}{0.508mm}}\color[rgb]{0,0,0}\pgfsetlinewidth{0.254mm}\pgfellipse[stroke]{\pgfpoint{30.48mm}{-55.88mm}}{\pgfpoint{0.508mm}{0mm}}{\pgfpoint{0mm}{0.508mm}}
        \color[gray]{0.7}\pgfellipse[fill]{\pgfpoint{43.18mm}{-55.88mm}}{\pgfpoint{0.508mm}{0mm}}{\pgfpoint{0mm}{0.508mm}}\color[rgb]{0,0,0}\pgfsetlinewidth{0.254mm}\pgfellipse[stroke]{\pgfpoint{43.18mm}{-55.88mm}}{\pgfpoint{0.508mm}{0mm}}{\pgfpoint{0mm}{0.508mm}}
        \color[gray]{0.7}\pgfellipse[fill]{\pgfpoint{53.34mm}{-50.8mm}}{\pgfpoint{0.508mm}{0mm}}{\pgfpoint{0mm}{0.508mm}}\color[rgb]{0,0,0}\pgfsetlinewidth{0.254mm}\pgfellipse[stroke]{\pgfpoint{53.34mm}{-50.8mm}}{\pgfpoint{0.508mm}{0mm}}{\pgfpoint{0mm}{0.508mm}}
        \color[gray]{0.7}\pgfellipse[fill]{\pgfpoint{45.72mm}{-63.5mm}}{\pgfpoint{0.508mm}{0mm}}{\pgfpoint{0mm}{0.508mm}}\color[rgb]{0,0,0}\pgfsetlinewidth{0.254mm}\pgfellipse[stroke]{\pgfpoint{45.72mm}{-63.5mm}}{\pgfpoint{0.508mm}{0mm}}{\pgfpoint{0mm}{0.508mm}}
        \color[gray]{0.7}\pgfellipse[fill]{\pgfpoint{60.96mm}{-63.5mm}}{\pgfpoint{0.508mm}{0mm}}{\pgfpoint{0mm}{0.508mm}}\color[rgb]{0,0,0}\pgfsetlinewidth{0.254mm}\pgfellipse[stroke]{\pgfpoint{60.96mm}{-63.5mm}}{\pgfpoint{0.508mm}{0mm}}{\pgfpoint{0mm}{0.508mm}}
        \color[gray]{0.7}\pgfellipse[fill]{\pgfpoint{63.5mm}{-55.88mm}}{\pgfpoint{0.508mm}{0mm}}{\pgfpoint{0mm}{0.508mm}}\color[rgb]{0,0,0}\pgfsetlinewidth{0.254mm}\pgfellipse[stroke]{\pgfpoint{63.5mm}{-55.88mm}}{\pgfpoint{0.508mm}{0mm}}{\pgfpoint{0mm}{0.508mm}}
        \color[gray]{0.7}\pgfellipse[fill]{\pgfpoint{63.5mm}{-40.64mm}}{\pgfpoint{0.508mm}{0mm}}{\pgfpoint{0mm}{0.508mm}}\color[rgb]{0,0,0}\pgfsetlinewidth{0.254mm}\pgfellipse[stroke]{\pgfpoint{63.5mm}{-40.64mm}}{\pgfpoint{0.508mm}{0mm}}{\pgfpoint{0mm}{0.508mm}}
        \color[gray]{0.7}\pgfellipse[fill]{\pgfpoint{43.18mm}{-40.64mm}}{\pgfpoint{0.508mm}{0mm}}{\pgfpoint{0mm}{0.508mm}}\color[rgb]{0,0,0}\pgfsetlinewidth{0.254mm}\pgfellipse[stroke]{\pgfpoint{43.18mm}{-40.64mm}}{\pgfpoint{0.508mm}{0mm}}{\pgfpoint{0mm}{0.508mm}}
        \color[gray]{0.7}\pgfellipse[fill]{\pgfpoint{43.18mm}{-30.48mm}}{\pgfpoint{0.508mm}{0mm}}{\pgfpoint{0mm}{0.508mm}}\color[rgb]{0,0,0}\pgfsetlinewidth{0.254mm}\pgfellipse[stroke]{\pgfpoint{43.18mm}{-30.48mm}}{\pgfpoint{0.508mm}{0mm}}{\pgfpoint{0mm}{0.508mm}}
        \color[gray]{0.7}\pgfellipse[fill]{\pgfpoint{63.5mm}{-30.48mm}}{\pgfpoint{0.508mm}{0mm}}{\pgfpoint{0mm}{0.508mm}}\color[rgb]{0,0,0}\pgfsetlinewidth{0.254mm}\pgfellipse[stroke]{\pgfpoint{63.5mm}{-30.48mm}}{\pgfpoint{0.508mm}{0mm}}{\pgfpoint{0mm}{0.508mm}}
        \color[gray]{0.7}\pgfellipse[fill]{\pgfpoint{63.5mm}{-20.32mm}}{\pgfpoint{0.508mm}{0mm}}{\pgfpoint{0mm}{0.508mm}}\color[rgb]{0,0,0}\pgfsetlinewidth{0.254mm}\pgfellipse[stroke]{\pgfpoint{63.5mm}{-20.32mm}}{\pgfpoint{0.508mm}{0mm}}{\pgfpoint{0mm}{0.508mm}}
        \color[gray]{0.7}\pgfellipse[fill]{\pgfpoint{53.34mm}{-10.16mm}}{\pgfpoint{0.508mm}{0mm}}{\pgfpoint{0mm}{0.508mm}}\color[rgb]{0,0,0}\pgfsetlinewidth{0.254mm}\pgfellipse[stroke]{\pgfpoint{53.34mm}{-10.16mm}}{\pgfpoint{0.508mm}{0mm}}{\pgfpoint{0mm}{0.508mm}}
        \color[gray]{0.7}\pgfellipse[fill]{\pgfpoint{43.18mm}{-20.32mm}}{\pgfpoint{0.508mm}{0mm}}{\pgfpoint{0mm}{0.508mm}}\color[rgb]{0,0,0}\pgfsetlinewidth{0.254mm}\pgfellipse[stroke]{\pgfpoint{43.18mm}{-20.32mm}}{\pgfpoint{0.508mm}{0mm}}{\pgfpoint{0mm}{0.508mm}}
        \color[rgb]{0,0,0}\pgfsetlinewidth{0.254mm}\pgfline{\pgfpoint{33.02mm}{-15.24mm}}{\pgfpoint{40.64mm}{-15.24mm}}
        \color[rgb]{0,0,0}\pgfsetlinewidth{0.254mm}\pgfline{\pgfpoint{39.37mm}{-13.97mm}}{\pgfpoint{40.64mm}{-15.24mm}}
        \color[rgb]{0,0,0}\pgfsetlinewidth{0.254mm}\pgfline{\pgfpoint{39.37mm}{-16.51mm}}{\pgfpoint{40.64mm}{-15.24mm}}
        \color[rgb]{0,0,0}\pgfsetlinewidth{0.254mm}\pgfline{\pgfpoint{33.02mm}{-35.56mm}}{\pgfpoint{40.64mm}{-35.56mm}}
        \color[rgb]{0,0,0}\pgfsetlinewidth{0.254mm}\pgfline{\pgfpoint{39.37mm}{-34.29mm}}{\pgfpoint{40.64mm}{-35.56mm}}
        \color[rgb]{0,0,0}\pgfsetlinewidth{0.254mm}\pgfline{\pgfpoint{39.37mm}{-36.83mm}}{\pgfpoint{40.64mm}{-35.56mm}}
        \color[rgb]{0,0,0}\pgfsetlinewidth{0.254mm}\pgfline{\pgfpoint{33.02mm}{-58.42mm}}{\pgfpoint{40.64mm}{-58.42mm}}
        \color[rgb]{0,0,0}\pgfsetlinewidth{0.254mm}\pgfline{\pgfpoint{40.64mm}{-58.42mm}}{\pgfpoint{39.37mm}{-57.15mm}}
        \color[rgb]{0,0,0}\pgfsetlinewidth{0.254mm}\pgfline{\pgfpoint{39.37mm}{-59.69mm}}{\pgfpoint{40.64mm}{-58.42mm}}
        \color[rgb]{0,0,0}\pgfputat{\pgfpoint{33.02mm}{-12.7mm}}{\pgfbox[left,center]{\shortstack{$H_1$}}}
        \color[rgb]{0,0,0}\pgfputat{\pgfpoint{33.02mm}{-33.02mm}}{\pgfbox[left,center]{\shortstack{$H_2$}}}
        \color[rgb]{0,0,0}\pgfputat{\pgfpoint{33.02mm}{-55.88mm}}{\pgfbox[left,center]{\shortstack{$H_3$}}}
   \color[rgb]{0,0,0}\pgfsetlinewidth{0.254mm}\pgfsetdash{{5pt}{5pt}{5pt}{5pt}}{0mm}\pgfline{\pgfpoint{43.18mm}{-30.48mm}}{\pgfpoint{63.5mm}{-40.64mm}}
        \color[rgb]{0,0,0}\pgfsetlinewidth{0.254mm}\pgfsetdash{{5pt}{5pt}{5pt}{5pt}}{0mm}\pgfline{\pgfpoint{53.34mm}{-50.8mm}}{\pgfpoint{60.96mm}{-63.5mm}}
        \color[rgb]{0,0,0}\pgfsetlinewidth{0.254mm}\pgfsetdash{{5pt}{5pt}{5pt}{5pt}}{0mm}\pgfline{\pgfpoint{45.72mm}{-63.5mm}}{\pgfpoint{53.34mm}{-50.8mm}}
        \color[rgb]{0,0,0} 
\end{pgfpicture}

  \caption{The spatial Henneberg steps. 
    These are analyzed in Section~\protect\ref{Sgenerate3d}.}
  \label{hennsteps} 
  \end{minipage}
  \hspace{0.1cm} 
  \begin{minipage}[h]{0.5\linewidth}
    \centering
      \includegraphics[scale=0.37]{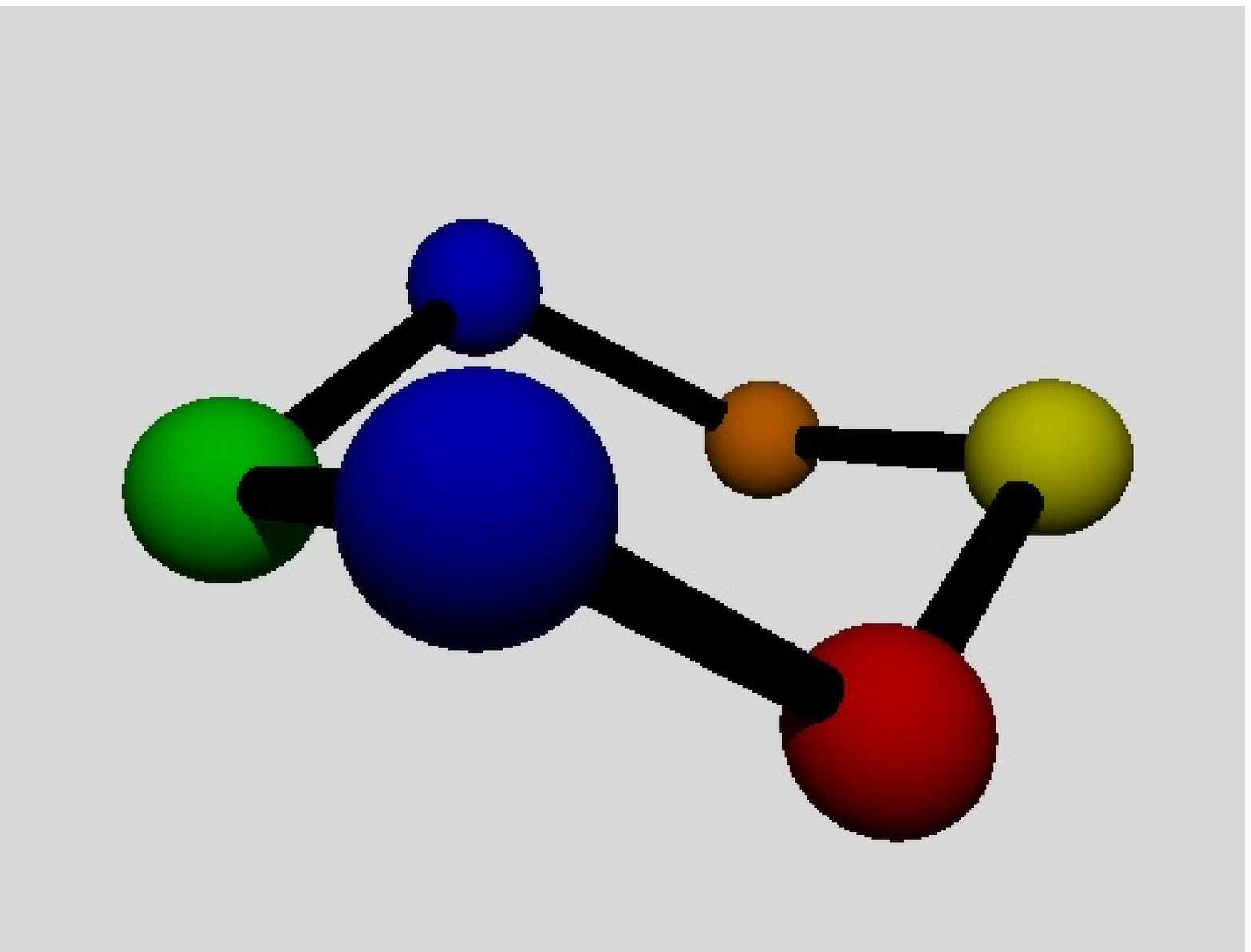}
  \label{cyclohexane configuration}
    \caption{One of the spatial (chair) configurations of the cyclohexane molecule.}
        \label{fig:cyclo_chair}

      \end{minipage}
\end{figure}

\begin{figure}[t]
   \begin{minipage}[b]{0.5\linewidth} 
    \centering 
     \includegraphics{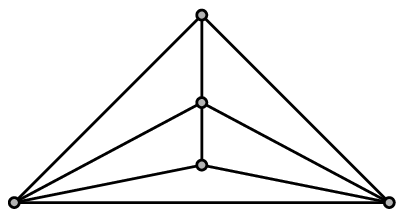}
     \caption{The only 1-skeleton of a simplicial polytope on 5 vertices.}
     \label{spatial5}
    \end{minipage}
  \hspace{0.1cm} 
  \begin{minipage}[b]{0.5\linewidth}
    \centering \includegraphics[scale=0.8]{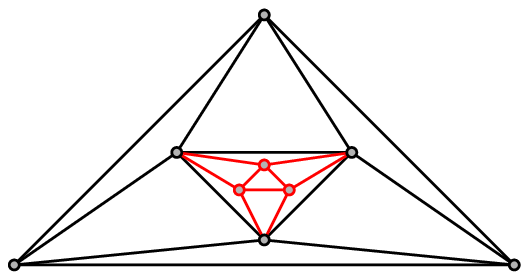}
    \caption{A Cyclohexane caterpillar with 2 copies.}
    \label{fig:cyclohexane}
  \end{minipage}
\end{figure}

\end{document}